\newtheorem{corollary}{Corollary}
\newtheorem{lemma}{Lemma}
\def\href#1#2{#1 #2}
\begin{document}
\bstctlcite{IEEEexample:BSTcontrol}

\title{Wireless Communications and Control for Swarms  of Cellular-Connected UAVs\vspace{-0.50cm}}

\author{\IEEEauthorblockN{Tengchan Zeng$^1$, Mohammad Mozaffari$^2$, Omid Semiari$^3$, Walid Saad$^1$, Mehdi Bennis$^4$, and Merouane Debbah$^5$}\vspace{-0.40cm}\\
	\IEEEauthorblockA{
		\small $^1$ Wireless@VT, Department of Electrical and Computer Engineering, Virginia Tech, Blacksburg, VA, USA,\\ Emails:\{tengchan, walids\}@vt.edu.\\
		$^2$ Ericsson, Santa Clara, CA, USA, Email:  mohammad.mozaffari@ericsson.com.\\
		$^3$ Department of Electrical and Computer Engineering, Georgia Southern University, Statesboro, GA, USA,\\ Email: osemiari@georgiasouthern.edu.\\
		$^4$ Centre for Wireless Communications, University of Oulu, Oulu, Finland, Email:
		mehdi.bennis@oulu.fi. \\
		$^5$ Mathematical and Algorithmic Sciences Lab, Huawei France R\&D, Paris, France, Email: merouane.debbah@huawei.com.\vspace{-0.80cm}
		\thanks{This research was supported by the U.S. National Science Foundation under Grant CNS-1739642.}
}}
\maketitle

\begin{abstract}
By using wireless connectivity through cellular base stations (BSs), swarms of unmanned aerial vehicles (UAVs) can provide a plethora of services ranging from delivery of goods to surveillance. In particular, UAVs in a swarm can utilize wireless communications to collect information, like velocity and heading angle, from surrounding UAVs for coordinating their operations and maintaining target speed and intra-UAV distance. 
However, due to the uncertainty of the wireless channel, wireless communications among UAVs will experience a transmission delay which can impair the swarm's ability to stabilize system operation. 
In this paper, the problem of joint communication and control is studied for a swarm of three cellular-connected UAVs positioned in a triangle formation. 
In particular, a novel approach is proposed for optimizing the swarm's operation while jointly considering the delay of the wireless network and the stability of the control system. 
Based on this approach, the maximum allowable delay required to prevent the instability of the swarm is determined. 
Moreover, by using stochastic geometry, the reliability of the wireless network is derived as the probability of meeting the stability requirement of the control system.   
The simulation results validate the effectiveness of the proposed joint strategy, and help obtain insightful design guidelines on how to form a stable swarm of UAVs.\vspace{-0.20cm}

\end{abstract}

\section{Introduction}
The deployment of unmanned aerial vehicles (UAVs), popularly known as drones, is rapidly increasing and will lead to the introduction of numerous application services ranging from delivery of goods to surveillance and smart city monitoring \cite{mozaffari2018tutorial}. 
In particular, driven by the ever-decreasing cost of manufacture components and the emergence of new services, the use of UAV swarms is rapidly gaining popularity \cite{bamburry2015drones,mozaffari2018beyond,mozaffari2018communications,burkle2011towards,mozaffari2016efficient,esrafilian2018simultaneous}.   

In addition, by using swarms of UAVs, one can complete more sophisticated missions in an uncertain and possibly hostile environment.   
For example, swarms of drones have been used for Amazon's prime air drone delivery services and emergency medicine delivery services \cite{bamburry2015drones}. 
Also, in \cite{mozaffari2018communications}, groups of UAVs have been used to create a reconfigurable antenna array in the sky so as to provide wireless service to ground users. 
Moreover, swarms of micro drones are being actively investigated by DAPRA's VisiBuilding program to complete reconnaissance missions inside buildings \cite{burkle2011towards}. 
Furthermore, the authors in \cite{mozaffari2016efficient} and \cite{esrafilian2018simultaneous} proposed to deploy multiple UAVs, which function as wireless base stations (BSs) or relays, so as to maximize the wireless coverage.

In particular, to complete their assigned missions, the UAVs in the swarm will have to communicate with BSs via cellular links for various purposes such as sending collected surveillance and monitoring information back to the BSs. 
Moreover, to guarantee a safe operation and avoid collisions between UAVs within the swarm, the UAVs will use the information received from the intra-swarm wireless network as an input of the control system. 
That is, each UAV can first use intra-swarm communications to obtain information of other UAVs in the swarm, such as their velocity and heading angle.
Then, the control system of each UAV will use sensor data and the information received from the wireless links to coordinate UAV's movements. 
However, due to the uncertainty of the wireless channel and the presence of wireless interference, the received information from wireless links will inevitably suffer from transmission delay, and the delayed information can impair the ability of the control system to coordinate the UAVs' movements \cite{gupta2016survey}. 
As a result, when designing a swarm of UAVs, we need to jointly consider the control system and wireless network to guarantee a stable formation. 

The prior art working on wirelessly connected swarms of UAVs can be grouped into two categories. 
In particular, the first category focuses on the intra-swarm communication network design \cite{zafar2017reliable} and \cite{li2008robot}. 
For example, the authors in \cite{zafar2017reliable} proposed a multicluster flying ad-hoc network to reduce the power consumption while maintaining an acceptable level of communication latency for UAV swarms.
Furthermore, a wireless mesh network is proposed in \cite{li2008robot} to improve the connectivity of a swarm of UAVs and build a pervasive networking environment.
However, prior works, such as \cite{zafar2017reliable} and \cite{li2008robot}, ignore the impact of wireless system on the stability of the UAV and solely focus on the communication system design. 
The second category focuses on coordination and control for effective task planning for UAVs \cite{bekmezci2014connected} and \cite{ben2008distributed}.
For example, in \cite{bekmezci2014connected}, a heuristic multi-UAV task planning algorithm is proposed to enable a swarm of cellular connected UAVs to visit all target points in a minimum time. 
In addition, the authors in \cite{ben2008distributed} proposed a behavioral flocking algorithm for distributed flight coordination of multiple UAVs. 
Note that, such control-centric works, like   \cite{bekmezci2014connected} and \cite{ben2008distributed}, assume a fixed wireless performance or just ignore the transmission delay generated by communication links when designing the control system. 
Such an assumption is certainly not practical for UAV swarms that use a cellular network due to the uncertainty of wireless channels and interference generated by coexisting wireless links.  

The main contribution of this paper is a novel approach to jointly design the control and communication system for a cellular-connected swarm of UAVs.
In particular, we first analyze the stability of the control system which can guarantee a stable triangle formation for a swarm of three UAVs. 
Then, we determine the maximum transmission delay that the considered swarm can tolerate without jeopardizing its control system's stability. 
This threshold can, in turn, be used to identify the reliability requirement for the wireless communication system. 
In particular, we use stochastic geometry to mathematically characterize the reliability of the wireless network. 
Simulation results validate the effectiveness of the proposed integrated communication and control strategy, and help obtain new design guidelines on how to create a stable formation for a swarm of UAVs.
For example, our results provide clear guideline on how to choose the target spacing for the swarm so as to guarantee a target reliability performance for the wireless network. 

The rest of this paper is organized as follows. Section \uppercase\expandafter{\romannumeral2} presents the system model. 
In Section \uppercase\expandafter{\romannumeral3}, we perform a stability analysis for the control system for the swarm of UAVs and derive the mathematical expression for the reliability of the wireless network by using stochastic geometry. Section \uppercase\expandafter{\romannumeral4} provides the simulation results and conclusions are drawn in Section \uppercase\expandafter{\romannumeral5}.\vspace{-0.3cm}
\begin{figure}[!t]
	\centering
	\includegraphics[width=2.8in,height=2.2in]{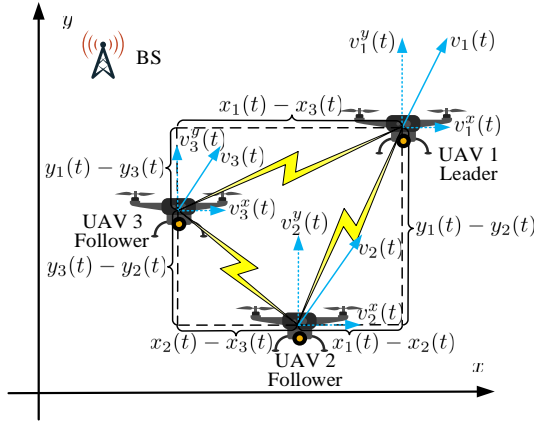}
	\vspace{-0.3cm}
	\DeclareGraphicsExtensions.
	\caption{A swarm of three UAVs where each UAV can communicate with the BS. UAV 1 is the leader, and UAVs 2 and 3 are followers.}
	\label{systemmodel}
	\vspace{-0.6cm}
\end{figure}

\section{System Model}
Consider a swarm of three UAVs flying at the same altitude. 
In this swarm, two UAVs are following a third, leading UAV to form and maintain a triangle formation, as shown in Fig. \ref{systemmodel}.
In this model, we assume that UAV $1$ is the leading UAV that always flies with a target velocity and heading direction, while UAVs $2$ and $3$ are followers.
Note that, when following the leading UAV, UAVs $2$ and $3$ will also need to keep a certain target distance with each other and with the leader. 
For each UAV, an embedded radar sensor can sense the distance to the nearby UAVs. 
Moreover, each following UAV can communicate with the two other UAVs in the swarm, via wireless cellular links, to obtain information of the velocity and heading angle. 
In addition, each UAV will communicate with the BSs through cellular links to report its movement and location or to complete tasks, like sending the collected surveillance information back to the BSs.

\subsection{Control System Model}
As shown in Fig. \ref{systemmodel}, we consider a Cartesian coordinate system centered on an arbitrarily selected point, and the location of each UAV at time $t$ is denoted by $(x_{i}(t),y_{i}(t)), i \in\{1,2,3\}$.
Also, by using the coordinate system, we can decompose the velocity of each UAV into two components: one on the x-axis and the other one on the y-axis. 
Moreover, we assume that the control laws of each following UAV over the x-axis and y-axis depend on the difference between the actual distance to the UAVs in the swarm and the target distance. 
For example, if the actual distance between a following UAV and other  UAVs in the swarm exceeds the corresponding target spacing, the following UAV needs to accelerate in order to reduce the spacing and reach the target distance. 
To determine the control law on each axis, we first take the component on the x-axis as an example. 
In particular, we define the x-axis distance difference by using the following spacing errors: 
\begin{align}
\label{spacingerror1}
\delta_{1,2}^{x}(t)\!\! =\!\! x_{1}(t)\!\! -\!\! x_{2}(t)\!\! -\!\!\bar{x}_{1,2}, 
\delta_{1,3}^{x}(t)\!\! =\!\! x_{1}(t)\! \!-\!\! x_{3}(t) \!\!-\!\!\bar{x}_{1,3},\\
\label{spacingerror2}
\delta_{2,3}^{x}(t) \!\!=\!\! x_{2}(t) \!\!-\!\! x_{3}(t) \!\!-\!\!\bar{x}_{2,3},
\delta_{3,2}^{x}(t) \!\!= \!\!x_{3}(t)\!\! -\!\! x_{2}(t) \!\!-\!\!\bar{x}_{3,2},
\end{align}
where $\bar{x}_{1,2}, \bar{x}_{1,3}, \bar{x}_{2,3},$ and $\bar{x}_{3,2}$ are the target x-axis spacing between the corresponding UAVs.
Note that $x_{i}(t)-x_{j}(t)$ is usually considered as the x-axis  headway distance between UAVs $i$ and $j$ with $i,j\in \{1,2,3\}$, at time $t$.
Also, we can define the x-axis velocity errors for following UAVs $2$ and $3$ as 
\begin{align}
\label{velocityerror1}
&z^{x}_{2}(t) = v_{2}^{x}(t) - \bar{v}_{x}, z^{x}_{3}(t) = v^{x}_{3}(t)- \bar{v}_{x},
\end{align}
where $\bar{v}_{x}$ is the x-axis component of the target operating velocity for the swarm of UAVs. Note that the spacing errors and velocity errors on the y-axis can be derived as done in (\ref{spacingerror1})--(\ref{velocityerror1}) and are omitted due to space limitations.

Similar to the dynamical system model introduced in \cite{bando1995dynamical}, the acceleration and deceleration of each UAV will depend on the spacing errors and velocity errors. 
In particular, the x-axis control law for each following UAV can be given by: 
\begin{align}
u^{x}_{i}(t)\!=& a_{i}\delta_{1,i}^{x}\!+\!b_{i}[v^{x}_{1}(t\!-\!\tau_{1,i}(t))\!-\!v_{i}(t)]\!+\!\hat{a}_{i}\delta_{j,i}^{x}\!+ \nonumber \\ &\!\hat{b}_{i}[v_{j}(t\!-\!\tau_{j,i}(t))\!-\!v_{i}(t)], i \!\neq\! j, i,j \!\in\! \{2,3\},
\end{align}
where $a_{i}$, $b_{i}$, $\hat{a}_{i}$, and $\hat{b}_{i}$ are the associated gains for each corresponding term, while $\tau_{j,i}$ captures the delay for the wireless link from UAV $j$ to UAV $i$.
Note that the associated gains essentially capture the sensibility of the control system to changes in distance and velocity. 
Also, since the leading UAV always flies with the target velocity and heading angle, then the solution to $v^{x}_{1}(t-\tau_{1,i}(t))=\bar{v}_{x}$ always exists.
Similarly, we can also derive the control law over the y-axis for each following UAV.
Therefore, based on the x-axis and y-axis control laws for the velocity components, we can determine how the velocity and heading angle of each following UAV should change. 

\subsection{Wireless Communication System}
For the wireless communication links between UAVs inside the swarm, we consider an orthogonal frequency-division multiple access (OFDMA) scheme where each communication link does not share the frequency resource with other links in the swarm. 
In this case, the wireless links in the swarm can coexist without suffering from interference from each other. 
However, the wireless links in one swarm can experience interference when other UAVs at the same altitude that are using the same frequency resource to transmit information with each other or with BSs via cellular links. 
To have a general interference model, we assume that the distribution of interfering UAVs at the same altitude with the swarm follows a 2-dimensional Poisson point process (2-D PPP) with density $\lambda$. 
Also, similar to \cite{matolak2012air}, we consider the wireless communication channels inside the swarm as independent Nakagami channels with parameter $\beta$, and we also model the wireless channels from interfering UAVs to UAVs inside the swarm as independent Rayleigh fading channels.  
Thus, the channel gain between a receiving UAV $i$ and a transmitter $j$ at time $t$ will be $g_{j,i}(t)=h_{j,i}(t)(d_{j,i}(t))^{-\alpha}$, where $h_{j,i}(t)$ captures the fading gain, $d_{j,i}(t)$ is the distance between UAVs $j$ and $i$, and $\alpha$ is the path loss exponent. 
Moreover, we can obtain the received signal at UAV $i$ as $P_{j,i}(t)=P_{t}g_{j,i}(t)$, where $P_{t}$ is the transmission power. Also, the signal-to-interference-plus-noise-ratio (SINR) can be given by $\gamma_{j,i}(t)=\frac{P_{j,i}(t)}{\sigma^2+I_{i}(t)}$, where $\sigma^2$ is the variance of the Gaussian noise, and $I_{i}(t)$ captures the interference experienced by UAV $i$.
Then, the data rate will be: $R_{j,i}(t)=\omega \log_{2}(1+\gamma_{j,i}(t))$, where $\omega$ is the bandwidth of the frequency resource. 
Whenever all packets are of equal size $S$ bits, the transmission delay of the wireless link between UAVs $j$ and $i$ can be derived as 
\begin{align}
\label{CommunicationDelay}
\tau_{j,i}(t)=\frac{S}{\omega \log_{2}(1+\gamma_{j,i}(t))}.
\end{align}
In the following section, we take into account the time-varying wireless transmission delay in (\ref{CommunicationDelay}) and analyze its effect on the stability of the control system in the swarm of UAVs.
\vspace{-0.05cm}
\section{Stability Analysis for the Swarm of UAVs}
For the swarm of UAVs, the delayed information received from the wireless links can negatively impact the control system's ability to coordinate the movements.
As a result, the target formation for the swarm of UAVs may fail to form. 
Here, we perform a stability analysis for the swarm under the influence of the transmission delay. 
In particular, we determine the transmission delay threshold which can guarantee that the following UAVs will fly at the same speed and heading angle with the leading UAV and keep the target distance to the other UAVs in the swarm. 
Based on the stability analysis, we employ stochastic geometry to mathematically characterize the \emph{reliability} of the wireless system, defined as the probability that the wireless system can meet the control system's delay requirements.  
\subsection{Stability Analysis}
To guarantee that each following UAV operates at the same speed and heading angle as the leading UAV and keeps a target distance to UAVs in the swarm, both spacing errors and velocity errors on the x-axis should converge to zero. 
To this end, we take the first-order derivative of (\ref{spacingerror1}), (\ref{spacingerror2}), and (\ref{velocityerror1}) as follows: 
\begin{align}\label{equation6}
\dot{\delta}_{1,2}^{x}(t)=&-z_{2}^{x}(t),\dot{\delta}_{1,3}^{x}(t)=-z_{3}^{x}(t),\\
\dot{\delta}_{2,3}^{x}(t)=&-\dot{\delta}_{1,2}^{x}(t)+\dot{\delta}_{1,3}^{x}(t), \dot{\delta}_{3,2}^{x}(t)=\dot{\delta}_{1,2}^{x}(t)-\dot{\delta}_{1,3}^{x}(t), \\
\dot{z}_{2}^{x}(t)=&(a_{2}+\hat{a}_{2}) \delta_{1,2}^{x}(t) + (-\hat{a}_{2})\delta_{1,3}^{x}(t)+\nonumber \\&(-b_{2}-\hat{b}_{2})(z_{2}^{x}(t))+\hat{b}_{2}z^{x}_{3}(t-\tau_{3,2}(t)), \\
\label{equation9}
\dot{z}_{3}^{x}(t)=&(a_{3}+\hat{a}_{3}) \delta_{1,3}^{x}(t)+ (-\hat{a}_{3})\delta_{1,2}^{x}(t)+\nonumber\\ &(-b_{3}-\hat{b}_{3})(z_{3}^{x}(t))+\hat{b}_{3}z^{x}_{2}(t-\tau_{2,3}(t)),  
\end{align}
where $\dot{\delta}_{1,2}^{x}(t), \dot{\delta}_{1,3}^{x}(t), \dot{\delta}_{2,3}^{x}(t), \dot{\delta}_{3,2}^{x}(t), \dot{z}_{2}^{x}(t)$, and $\dot{z}_{3}^{x}(t)$ are variables differentiated with respect to time $t$.
Additionally, since the channel gains of the wireless links between UAVs $2$ and $3$ and between UAVs $3$ and $2$ follow the same distribution, we assume $\tau_{2,3}(t)=\tau_{3,2}(t)=\triangle \tau(t)$.
After collecting the spacing and velocity errors for all following UAVs, we can find the augmented error state vector at the x-axis  $\boldsymbol{e}^{x}(t)=[\delta_{1,2}^{x}(t),\delta_{1,3}^{x}(t),z_{2}^{x}(t),z_{3}^{x}(t)]^{T}$ and obtain 
\begin{align}
\label{controlSystem}
\dot{\boldsymbol{e}}^{x}(t) = \boldsymbol{M}_{1}\boldsymbol{e}^{x}(t)+\boldsymbol{M}_{2}\boldsymbol{e}^{x}(t-\triangle \tau(t)),
\end{align}
where 
\begin{align}
\boldsymbol{M}_{1}=\begin{bmatrix}
0 & 0 & -1 & 0 \\
0 & 0 & 0 & -1 \\
a_2 + \hat{a}_{2} & -\hat{a}_{2} & -b_2-\hat{b}_{2} & 0 \\
-a_3 & (a_3+\hat{a}_3) & 0 & -b_3-\hat{b}_{3} 
\end{bmatrix},
\end{align}
and 
\begin{align}
\boldsymbol{M}_{2}=\begin{bmatrix}
0 & 0 & 0 & 0 \\
0 & 0 & 0 & 0 \\
0 & 0 & 0 & \hat{b}_2\\
0 & 0 & \hat{b}_3 & 0 
\end{bmatrix}.
\end{align}
Since the stability of the swarm of UAVs requires the x-axis spacing and velocity errors of all following UAVs to approach zero, the error vector $\boldsymbol{e}^{x}(t)=\boldsymbol{0}_{4 \times 1}$ should be at least asymptotically stable.

Guaranteeing the stability for a wireless-connected swarm will hence require a small wireless transmission delay. 
Therefore, next, as a direct result of \cite[Theorem 1]{zeng2018AV}, we can characterize the maximum transmission delay needed to support the convergence of error vector $\boldsymbol{e}^{x}(t)$ to the zero vector in following corollary.  

\begin{corollary}
	\label{theorem1}
	The convergence of the error vector $\boldsymbol{e}^{x}(t)$ in (\ref{controlSystem}) to the zero vector is asymptotically stable if the transmission delay $\triangle \tau(t)$ of the wireless links between these two following UAVs in the swarm satisfies: 
	\begin{align}
	\label{SINR11}
	&\triangle \tau(t)  \leq \tau_{x} =\nonumber \\ &\frac{1}{\lambda_{\max}(\boldsymbol{C}\boldsymbol{M}_{2}\boldsymbol{M}_{1}\boldsymbol{M}_{1}^T\boldsymbol{M}_{2}^T\boldsymbol{C}^T\!+\!\boldsymbol{C}\boldsymbol{M}_{2}\boldsymbol{M}_{2}\boldsymbol{M}_{2}^T \boldsymbol{M}_{2}^T\boldsymbol{C}^T\!+\!2k\boldsymbol{I})},
	\end{align}
	where $k\!>\!1$, $\boldsymbol{C}$ is a positive definitive matrix  meeting $\boldsymbol{C}(\boldsymbol{M}_{1}\!+\!\boldsymbol{M}_{2})+(\boldsymbol{M}_{1}\!+\!\boldsymbol{M}_{2})^{T}\boldsymbol{C}\!=\!-\boldsymbol{I}_{4 \times 4}$, and $\lambda_{\max}(\cdot)$ represents the maximum eigenvalue of the corresponding matrix.
\end{corollary}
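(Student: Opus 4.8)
The plan is to obtain the corollary directly from \cite[Theorem 1]{zeng2018AV} by checking that the delayed error dynamics in (\ref{controlSystem}) fit the hypotheses of that general result. The system $\dot{\boldsymbol{e}}^{x}(t) = \boldsymbol{M}_{1}\boldsymbol{e}^{x}(t)+\boldsymbol{M}_{2}\boldsymbol{e}^{x}(t-\triangle \tau(t))$ is a linear retarded differential equation with a single time-varying delay, which is exactly the class treated there. The first thing I would verify is that the delay-free system $\dot{\boldsymbol{e}}^{x}(t)=(\boldsymbol{M}_{1}+\boldsymbol{M}_{2})\boldsymbol{e}^{x}(t)$ is asymptotically stable, i.e.\ that $\boldsymbol{M}_{1}+\boldsymbol{M}_{2}$ is Hurwitz. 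The stated Lyapunov equation $\boldsymbol{C}(\boldsymbol{M}_{1}+\boldsymbol{M}_{2})+(\boldsymbol{M}_{1}+\boldsymbol{M}_{2})^{T}\boldsymbol{C}=-\boldsymbol{I}_{4\times 4}$ admits a unique positive-definite solution $\boldsymbol{C}$ precisely when this holds, so the existence of $\boldsymbol{C}$ serves as both an assumption and a certificate of nominal stability.

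To see where the delay bound originates, I would apply the Leibniz--Newton (model) transformation, writing $\boldsymbol{e}^{x}(t-\triangle\tau)=\boldsymbol{e}^{x}(t)-\int_{t-\triangle\tau}^{t}\dot{\boldsymbol{e}}^{x}(s)\,\mathrm{d}s$ and substituting $\dot{\boldsymbol{e}}^{x}(s)=\boldsymbol{M}_{1}\boldsymbol{e}^{x}(s)+\boldsymbol{M}_{2}\boldsymbol{e}^{x}(s-\triangle\tau)$. This recasts the dynamics as $\dot{\boldsymbol{e}}^{x}(t)=(\boldsymbol{M}_{1}+\boldsymbol{M}_{2})\boldsymbol{e}^{x}(t)-\boldsymbol{M}_{2}\int_{t-\triangle\tau}^{t}[\boldsymbol{M}_{1}\boldsymbol{e}^{x}(s)+\boldsymbol{M}_{2}\boldsymbol{e}^{x}(s-\triangle\tau)]\,\mathrm{d}s$, separating the Hurwitz nominal part from a delay-dependent perturbation whose magnitude is controlled by $\triangle\tau$.

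Next I would choose a Lyapunov--Krasovskii functional anchored at $V=\boldsymbol{e}^{xT}(t)\boldsymbol{C}\boldsymbol{e}^{x}(t)$, augmented with double-integral terms accounting for the delayed states. Differentiating along trajectories, the Lyapunov equation collapses the leading contribution to $-\|\boldsymbol{e}^{x}(t)\|^{2}$, while the two cross terms produced by the perturbation are bounded through Young's inequality $-2\boldsymbol{a}^{T}\boldsymbol{b}\le \tfrac{1}{k}\boldsymbol{a}^{T}\boldsymbol{a}+k\boldsymbol{b}^{T}\boldsymbol{b}$, taking $\boldsymbol{a}=\boldsymbol{M}_{1}^{T}\boldsymbol{M}_{2}^{T}\boldsymbol{C}\boldsymbol{e}^{x}(t)$ and $\boldsymbol{a}=\boldsymbol{M}_{2}^{T}\boldsymbol{M}_{2}^{T}\boldsymbol{C}\boldsymbol{e}^{x}(t)$ respectively, with $\boldsymbol{b}=\boldsymbol{e}^{x}(s)$ and $\boldsymbol{b}=\boldsymbol{e}^{x}(s-\triangle\tau)$. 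This is exactly what generates the matrices $\boldsymbol{C}\boldsymbol{M}_{2}\boldsymbol{M}_{1}\boldsymbol{M}_{1}^{T}\boldsymbol{M}_{2}^{T}\boldsymbol{C}^{T}$ and $\boldsymbol{C}\boldsymbol{M}_{2}\boldsymbol{M}_{2}\boldsymbol{M}_{2}^{T}\boldsymbol{M}_{2}^{T}\boldsymbol{C}^{T}$, together with the $2k\boldsymbol{I}$ arising from the two residual $k\|\boldsymbol{e}^{x}(\cdot)\|^{2}$ contributions. Requiring $\dot{V}$ to remain negative definite over the integration window of length $\triangle\tau$ then forces $\triangle\tau\le\tau_{x}$, with $\tau_{x}$ equal to the reciprocal of the largest eigenvalue of the accumulated matrix.

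The main obstacle I anticipate is the bookkeeping in the cross terms: one must integrate the pointwise bounds over $[t-\triangle\tau,t]$, use the zero-delay stability margin to dominate the residual $\|\boldsymbol{e}^{x}(s)\|^{2}$ contributions, and confirm that the $\triangle\tau$-weighting assembles precisely into the maximum-eigenvalue condition rather than a looser sufficient bound. Since the statement is framed as a corollary, the cleanest route is to establish the one-to-one correspondence between the quadruple $(\boldsymbol{M}_{1},\boldsymbol{M}_{2},\boldsymbol{C},k)$ here and the symbols in \cite[Theorem 1]{zeng2018AV}, after which the delay threshold (\ref{SINR11}) is inherited verbatim.
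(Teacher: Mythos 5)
Your proposal takes essentially the same route as the paper: the paper offers no proof of its own and simply invokes \cite[Theorem 1]{zeng2018AV}, which is exactly the correspondence you establish before sketching the underlying delay-dependent stability argument. One small caveat on your reconstruction of that underlying argument: you attribute the parameter $k$ to a Young's-inequality splitting, whereas the requirement $k>1$ and the paper's description of $k$ as the ``coefficient of nondecreasing function'' indicate that the cited theorem uses a Lyapunov--Razumikhin condition $V(\boldsymbol{e}^{x}(s))\le kV(\boldsymbol{e}^{x}(t))$ rather than a Krasovskii functional, though this does not affect the validity of inheriting the bound (\ref{SINR11}).
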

Similar to the analysis in Corollary \ref{theorem1}, we can find the delay requirement $\tau_{y}$ which can guarantee that the convergence of error terms at $y$-axis to the zero vector is asymptotically stable. 
In this case, to guarantee that the error terms of the following UAVs on x and y axes converge to zero and the swarm of UAVs forms the target formation, the maximum allowable wireless transmission delay experienced by the receiving UAV should be $\min(\tau_{x},\tau_{y})$.  
\subsection{Reliability Analysis of the Wireless System}
To characterize the performance of the wireless system, we introduce a notion of reliability for the wireless system. 
In particular, we can use stochastic geometry to derive the mathematical expression for the reliability of the wireless network. 
Different from our work in \cite{zeng2018joint}, we consider that the distribution of the interfering UAVs follows a 2D-PPP. 
In the following lemma, we use stochastic properties from the 2D-PPP and calculate the reliability. 

\begin{lemma}
	\label{theorem2}
If the distribution of the interfering UAV follows a 2-D PPP with density $\lambda$, the reliability of the wireless link from UAV $j$ to UAV $i$, $i\!\neq\! j, i, j \!\in\! \{2,3\}$, can be defined as 
\begin{align}
Pr_{j,i}\!\! \approx\!\!& \sum_{k=1}^{\beta}(-1)^{k+1}{{\beta}\choose{k}}\exp\left(\frac{-k\eta \left(2^{\frac{S}{\omega \min(\tau_{x},{\tau_{y}})}}\!\!-\!\!1\right) d_{j,i}^{-\alpha}}{P_{t}}\sigma^{2}\right)\nonumber \\ &\mathcal{L}_{i}\left(\frac{k\eta\left(2^{\frac{S}{\omega \min(\tau_{x},{\tau_{y}})}}\!\!-\!\!1\right)d_{j,i}^{-\alpha}}{P_{t}}\right),
\end{align}
where $\eta=\beta(\beta!)^{-1/\beta}$, and 
\begin{align}
\label{laplace}
\mathcal{L}_{i}(n)&=\exp\left(-2\pi\lambda\int_{0}^{\infty}\left(1-\frac{1}{1+nP_{t}r^{-\alpha}}\right)rdr\right).
\end{align}
\end{lemma}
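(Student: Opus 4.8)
The plan is to recast the reliability $Pr_{j,i}$ as a signal-to-interference-plus-noise ratio (SINR) outage probability and then evaluate it by combining the Nakagami-$\beta$ fading of the intended link with the Laplace transform of the interference field. First I would invert the rate expression $R_{j,i}(t)=\omega\log_2(1+\gamma_{j,i}(t))$ to see that the delay requirement $\tau_{j,i}(t)\leq\min(\tau_x,\tau_y)$ in (\ref{CommunicationDelay}) holds exactly when the SINR clears the threshold $\gamma_{\mathrm{th}}=2^{S/(\omega\min(\tau_x,\tau_y))}-1$. Writing $\gamma_{j,i}(t)=P_t h_{j,i}(t)d_{j,i}^{-\alpha}/(\sigma^2+I_i(t))$ and solving for the intended-link power gain $h_{j,i}$, the reliability becomes $Pr_{j,i}=\Pr\!\bigl(h_{j,i}\geq\Theta_{j,i}(\sigma^2+I_i)\bigr)$, where $\Theta_{j,i}$ collects the threshold and path-loss constant $\gamma_{\mathrm{th}}d_{j,i}^{-\alpha}/P_t$, and the only remaining randomness sits in the aggregate interference $I_i=\sum_{\ell}P_t h_{\ell,i}d_{\ell,i}^{-\alpha}$ summed over the interfering UAVs.

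Since the intended link experiences Nakagami-$\beta$ fading, $h_{j,i}$ is Gamma-distributed with integer shape $\beta$ and unit mean, so I would condition on $I_i$ and treat its complementary CDF. The exact CCDF is an incomplete Gamma function that does not split into a sum of exponentials, so the key step is to invoke Alzer's inequality, which tightly approximates the Gamma CDF by $\bigl(1-e^{-\eta x}\bigr)^{\beta}$ with $\eta=\beta(\beta!)^{-1/\beta}$; this is precisely the constant $\eta$ in the statement and the source of the ``$\approx$''. A binomial expansion then gives $\Pr(h_{j,i}\geq x\mid I_i)\approx\sum_{k=1}^{\beta}(-1)^{k+1}\binom{\beta}{k}e^{-k\eta x}$, and substituting $x=\Theta_{j,i}(\sigma^2+I_i)$ factors each term into a deterministic noise exponential $e^{-k\eta\Theta_{j,i}\sigma^2}$ and a random interference exponential $e^{-k\eta\Theta_{j,i}I_i}$.

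Taking the expectation over $I_i$ and using the independence of the intended fading and the interference field, each random factor becomes the Laplace transform of $I_i$ evaluated at $s_k=k\eta\Theta_{j,i}$, i.e. $\mathbb{E}[e^{-s_k I_i}]=\mathcal{L}_i(s_k)$. This reproduces the outer sum in the statement, with the noise exponential and $\mathcal{L}_i$ appearing as its two factors. Finally I would compute $\mathcal{L}_i$ through the probability generating functional of the density-$\lambda$ PPP: for exponential (Rayleigh) interferer gains of unit mean, $\mathbb{E}_h[e^{-sP_t h r^{-\alpha}}]=1/(1+sP_t r^{-\alpha})$, and passing to polar coordinates ($dx=2\pi r\,dr$) yields the closed form $\mathcal{L}_i(n)=\exp\bigl(-2\pi\lambda\int_0^{\infty}(1-1/(1+nP_t r^{-\alpha}))\,r\,dr\bigr)$ in (\ref{laplace}).

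The main obstacle is the Gamma complementary CDF of the Nakagami-$\beta$ intended link. Its exact form---$e^{-\beta x}$ times a degree-$(\beta-1)$ polynomial in $x$, with $x$ carrying the random interference $I_i$---does not reduce to a sum of pure exponentials, so the polynomial prefactor would force differentiating the interference Laplace transform rather than simply evaluating it. Alzer's inequality circumvents this by replacing the CDF with the tight approximation $(1-e^{-\eta x})^{\beta}$, whose binomial expansion is purely exponential in $x$; this is exactly what injects the ``$\approx$'' into the statement. Once that single step is in place, the separation of the deterministic noise term and the reduction of the interference average to the PPP Laplace transform in (\ref{laplace}) are standard stochastic-geometry computations.
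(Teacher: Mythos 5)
Your proposal is correct and follows essentially the same route as the paper: the paper defers the Nakagami/Alzer-inequality outage computation to its cited prior work (the constant $\eta=\beta(\beta!)^{-1/\beta}$ you identify is exactly the signature of that step), and its explicit derivation of $\mathcal{L}_i(n)$ via the PGFL of the 2-D PPP with unit-mean exponential interferer gains matches yours line for line. No gaps.
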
 
\begin{proof}[Proof:\nopunct]
	The basics of the proof follow from \cite{zeng2018joint} and are omitted here due to space constraints. However, the proof of the Laplace transform in (\ref{laplace}) is different and provided as follows:
	\begin{align}
	\mathcal{L}_{i}(n) =& \mathbb{E}_{\Phi} \left[ \exp\left(-n \sum_{c \in \Phi}P_{t} g_{c,i}(t)(d_{c,i}(t))^{-\alpha} \right)  \right] \nonumber \\ 
	=&\mathbb{E}_{\Phi}\left[\prod_{c \in \Phi}\mathbb{E}_{g_{c,i}}\left( \exp\left(-nP_{t} g_{c,i}(t)(d_{c,i}(t))^{-\alpha}\right)\right) \right] \nonumber\\
	\stackrel{(a)}{=}& \mathbb{E}_{\Phi}\left[\prod_{c \in \Phi} \frac{1}{1+nP_{t}d_{c,i}^{-\alpha}} \right] \nonumber \\ 
	\stackrel{(b)}{=}&\exp\left[-2\pi\lambda\int_{0}^{\infty}\left(1-\frac{1}{1+nP_{t}r^{-\alpha}}\right)rdr\right],
	\end{align}
	where $\Phi$ denotes the set of interfering UAVs around the swarm, 
	(a) follows from the assumption of Rayleigh fading channel and the channel gain $g_{c,i}(t)\sim\exp(1)$, and the probability generating function (PGFL) of 2D-PPP \cite{haenggi2012stochastic} is used to prove the changes in (b) .  
\end{proof}
\begin{table}[!t]
	\large
	\begin{center}
		\caption{\small Simulation parameters.}
		\label{table_example}
		\resizebox{8cm}{!}{
			\begin{tabular}{|c|c|c|}
				\hline
				\textbf{Parameter} & \textbf{Meaning} & \textbf{Value} \\  \hline
				$a_{2}$, $b_{2}$, $\hat{a}_{2}$, $\hat{b}_{2}$ & Associated gains of UAV 2& $1,1,1.5,1.5$ \\ \hline
				$a_{3}$, $b_{3}$, $\hat{a}_{3}$, $\hat{b}_{3}$ & Associated gains of UAV 2& $1,1,1.5,1.5$ \\ \hline				
				$k$ & Coefficient of nondecreasing function & $1.01$ \\ \hline
				$m$  & Nakagami parameter & $3$\\ \hline
				$\alpha$  & Path loss exponent & $3$ \\ \hline
				$\sigma^{2}$ & Noise variance  & $-174$~dBm/Hz  \\ \hline
				$S$ & Packet size & $3200$~bits \\ \hline 
				$\omega$ & Bandwidth & $20$~MHz \\ \hline 
		\end{tabular}}
	\vspace{-0.6cm}
	\end{center}
\end{table}
\begin{figure}[!t]
	\centering
	\subfloat[Spacing errors over time.]{\includegraphics[width=0.40\textwidth]{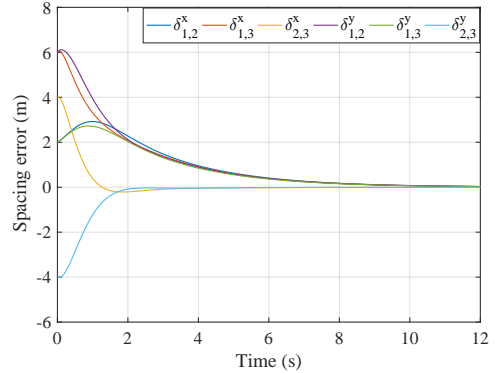}\label{fig:f1}}
	\vspace{-0.3cm}
	\hfill
	\subfloat[Velocity errors over time.]{\includegraphics[width=0.40\textwidth]{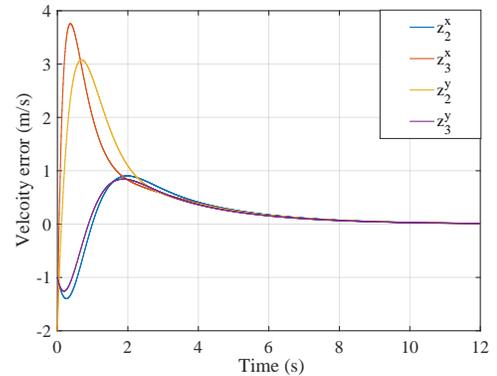}\label{fig:f2}}
	\vspace{-0.1cm}
	\caption{Stability validation for a swarm of three UAVs when using the maximum allowable transmission delay derived in Corollary \ref{theorem1}.}
	\label{figstability}
	\vspace{-0.5cm}
\end{figure}
\section{Simulation Results and Analysis}
For our simulations, we first validate the results derived in Corollary \ref{theorem1}.
Based on Lemma \ref{theorem2}, we then study the impact of interference on the reliability performance of the wireless network and finally obtain the design guideline of formulating a stable triangle formation for a swarm of three UAVs. 
All simulation parameters are summarized in Table \ref{table_example}. 
Without loss of generality, we assume that the two following UAVs have the same control gains, given in Table \ref{table_example}. 
Using the parameter settings in Table \ref{table_example} for Corollary \ref{theorem1}, we can find that the maximum allowable transmission delay to avoid the instability of the control system is $18.2$~ms.

We first corroborate the analytical result in Corollary \ref{theorem1} on the stability of the control system under the derived transmission delay threshold. 
In particular, we model the uncertainty of the wireless channel pertaining to the wireless communication links in the swarm of UAVs as a time-varying delay in the range $(0,18.2$~ms). 
The following UAVs are initially assigned with different velocities from the target velocity and random locations. 
Here, the leading UAV flies with speed components $\bar{v}_{x}=5$~m/s and $\bar{v}_{y}=5$~m/s, and the target spacing between UAVs are $\bar{x}_{1,2}=3$~m, $\bar{x}_{1,3}=4$~m, $\bar{x}_{2,3}=1$~m, $\bar{y}_{1,2}=4$~m, $\bar{y}_{1,3}=3$~m, and $\bar{y}_{2,3}=-1$~m.
Fig. \ref{figstability}\subref{fig:f1} shows the time evolution of the spacing errors. 
We can observe that the spacing errors at x-axis and y-axis for both following UAVs will eventually converge to $0$.
Also, in Fig. \ref{figstability}\subref{fig:f2}, we can observe that the velocity errors will converge to $0$ when time passes by.  
Thus, by choosing the maximum delay derived in Corollary \ref{theorem1}, we can ensure that the stability of the swarm of UAVs is guaranteed and the target formation can be formulated. 
\begin{figure}[!t]
	\centering
	\includegraphics[width=3in,height=2.5in]{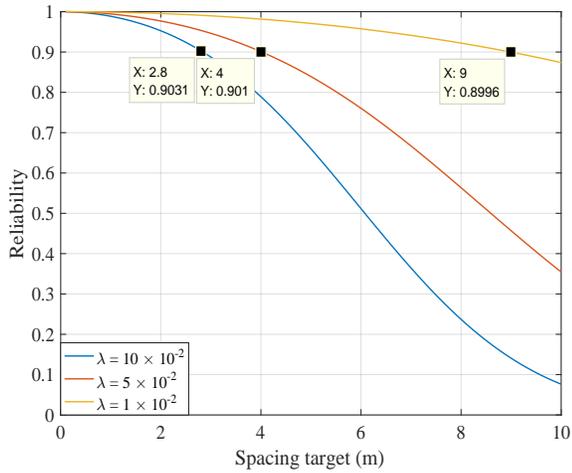}
	\vspace{-0.3cm}
	\DeclareGraphicsExtensions.
	\caption{Reliability performance of systems with different densities of interfering UAV when spacing target increases.}
	\label{Reliability}
	\vspace{-0.6cm}
\end{figure}

Fig. \ref{Reliability} shows the reliability performance of the wireless network with different densities of interfering UAVs when the spacing target increases.  
As observed from Fig. \ref{Reliability}, when the spacing target between two following UAVs increases, the reliability of the wireless network will decrease. 
For example, for a system with density of interfering UAVs  $\lambda=5\times 10^{-2}$~UAV/m$^2$, the reliability is around $35$\% when the spacing target is $10$~m. 
However, when the spacing target is $4$~m, the reliability is $90.1$\%.
Moreover, we can obtain a design guideline on how to guarantee a stable UAV formation from the results shown in Fig. \ref{Reliability}.
In particular, to guarantee that the reliability of the wireless system exceeds a threshold, we need to properly choose the target spacing  between two following UAVs in the swarm. 
As shown in Fig. \ref{Reliability}, for a system with $\lambda=10\times 10^{-2}$~UAV/m$^2$, the target spacing should be chosen a smaller value than $2.8$~m so that the reliability performance can exceed $90$\%. 
Also, for a system with $\lambda=5\times 10^{-2}$~UAV/m$^2$, the corresponding target spacing value should be smaller than $4$~m to reach a reliability of $90$\%.
Moreover, when the density is chosen as $\lambda=1\times 10^{-2}$~UAV/m$^2$, the spacing target should be smaller than $9$~m.
This is due to the fact that the strength of the receiving signal will decrease and the delay of the communication link will increase when the spacing increases. 

\section{Conclusion}
In this paper, we have proposed a novel approach to jointly design the control and communication system of a cellular-connected swarm of UAVs.
Based on the integrated communication and control strategy, we have performed a control system stability analysis and derived the delay threshold which can prevent the instability of the swarm of UAVs. 
We have used stochastic geometry to derive the mathematical expression for the reliability of the wireless system, defined as the probability of meeting the control system's delay requirements. 
Simulation results have shown that leveraging the synergies between control and wireless systems can result in a stable operation of a swarm of UAVs. 

\def\baselinestretch{0.97}
\bibliographystyle{IEEEtran}

\IEEEpeerreviewmaketitle

\end{document}